\documentclass{article}



\usepackage{amsfonts,amssymb,bbm}
\usepackage[english]{babel}
\usepackage{color}
\usepackage{array}
\usepackage[T1]{fontenc}


\def\QSCDff{\mathbbm{QSCD}_{f\!f}}

\def\sgn{sgn}
\newcommand{\bra}[1]{{\left\langle{#1}\right\vert}}
\newcommand{\ket}[1]{{\left\vert{#1}\right\rangle}}
\newcommand{\bracket}[2]{\left\langle #1| #2 \right\rangle}
\def\nats{\mathbbm N}
\newcommand\K{\mathbbm{K}}
\renewcommand\S{\mathbbm{S}}
\def\E{\mathbbm{E}}
\def\O{\mathbbm{O}}
\def\ds{\displaystyle}

\newcommand\cpi{C_\pi}

\newcommand\cone{C_1}
\newcommand{\I}{\mathbbm 1}

\def\qed{\hfill $\square$ }

\newtheorem{theorem}{Theorem}[section]
\newtheorem{proposition}[theorem]{Proposition}
\newtheorem{lemma}[theorem]{Lemma}

\newtheorem{algorithm}[theorem]{Algorithm}
\newtheorem{protocol}[theorem]{Protocol}

\newtheorem{example}[theorem]{Example}

\newtheorem{definition}[theorem]{Definition}

\newtheorem{problem}[theorem]{Problem}

\newenvironment{proof}{\begin{trivlist}\item{\bf Proof:}}{\end{trivlist}}

\def\makenewenum#1#2{%
\newcounter{cnt#1}
\newenvironment{#1}%
{\begin{list}{\makebox[0pt][r]{#2}}%
{\setlength{\itemsep}{0pt}%
 \setlength{\parsep}{.2em}%
 \setlength{\leftmargin}{2.5em}%
 \setlength{\labelwidth}{.2em}%
 \usecounter{cnt#1}}}%
{\end{list}}}

\makenewenum{senum}{\bf{Step~\arabic{cntsenum}.}}

\usepackage[affil-it]{authblk}
\begin{document}
\title
{Oblivious transfer based on quantum state computational distinguishability}

\author{	A Souto$^{1,2}$, 
		P Mateus$^{1,2}$, 
		P Ad\~ao$^{1,3}$ and 
		N Paunkovi\'c$^{1,2}$
	}
\affil{$^1$ SQIG---Instituto de Telecomunica\c{c}\~oes -- Avenida Rovisco Pais 1, 1049-001 Lisbon, Portugal }
\affil{$^2$ Departamento de Matem\'atica, Instituto Superior T\'ecnico, Universidade de Lisboa -- Avenida Rovisco Pais 1, 1049-001 Lisbon, Portugal }
\affil{$^3$ Departamento de Engenharia Inform\'atica, Instituto Superior T\'ecnico, Universidade de Lisboa -- Avenida Rovisco Pais 1, 1049-001 Lisbon, Portugal }
\affil{		\bf{a.souto@math.ist.utl.pt}, 
		\bf{pmat@math.ist.utl.pt},  
		\bf{pedro.adao@ist.utl.pt}, 
		\bf{npaunkovic@math.ist.utl.pt}}
		

\maketitle

\begin{abstract}
Oblivious transfer protocol is a basic building block in cryptography and is used to transfer information from a sender to a receiver in such a way that, at the end of the protocol, the sender does not know if the receiver got the message or not.

Since Shor's quantum algorithm appeared,
the security of most of classical cryptographic schemes has been compromised, as they rely on the fact that factoring is unfeasible. 
To overcome this, quantum mechanics has been used intensively in the past decades, and alternatives resistant to quantum attacks have been developed in order to fulfill the (potential) lack of security of a significant number of classical schemes.

In this paper, we present a quantum computationally secure protocol for oblivious transfer between two parties, under the assumption of quantum hardness of state distinguishability. The protocol is feasible, in the sense that it is implementable in polynomial time.
\end{abstract} 

{ Keywords:} Quantum information, Communication security, Oblivious transfer

{ PACS numbers:} 03.67.-a, 03.67.Ac, 03.67.Dd
\section{Introduction}
An oblivious transfer protocol involves two parties, a sender (Alice) and a receiver (Bob). It consists of two phases: the transferring phase and the opening phase.
The goal of the sender is to send a message during the transferring phase, that will not be known to the receiver until the opening phase, during which the sender reveals the message with probability $1/2$. 
The goal of the receiver is that, upon opening the message, the sender is oblivious to the fact that the message was successfully transferred or not. 
Although not explicitly stated in the original argument, it is usually assumed that the receiver knows, at the end of the protocol, if he got the correct message (see for example \cite{cre:87}).

The first oblivious transfer scheme was proposed by Rabin~\cite{rab:81} and is based on the same assumptions as the RSA cryptographic system.  
In Rabin's scheme, the sender sends a message to the receiver that is able to decrypt it properly with probability 1/2. 
At the end of the protocol, the sender remains oblivious to whether or not the receiver got the correct message. 
Cr\'epeau later showed that Rabin's oblivious transfer is equivalent to $1$-out-of-$2$ oblivious transfer \cite{cre:87}.  
In $1$-out-of-$2$ oblivious transfer the sender has two messages to send such that:  the receiver gets only one of the two with equal probability; the sender is oblivious to which message was received. 
Although not considered by then in the cryptographic domain, Wiesner had already proposed (in the 70's but only published in the 80's) in his pioneer work on quantum cryptography a scheme based on multiplexing which is equivalent to $1$-out-of-$2$ oblivious transfer protocol \cite{wie:83}.
Even, Goldreich, and Lempel~\cite{eve:gol:lem:85} generalized the original idea of $1$-out-of-$2$ protocol proposing the notion of $1$-out-of-$n$ oblivious transfer protocol.
In this case, the receiver gets $1$-out-of-$n$ possible different messages.

Oblivious transfer is a particularly important primitive as from an oblivious transfer protocol one can apply the technique presented in \cite{cre:87} to construct a $1$-out-of-$2$ oblivious transfer protocol which in turn, using the technique presented in~\cite{ben:bra:cre:sku:91}, can be transformed into a bit-commitment scheme from $O(n)$ oblivious transfer instances. 
Furthermore, practical protocols of oblivious transfer are useful in designing secure multiparty computation schemes, as suggested in recent papers \cite{lin:pin:12,lin:zar:13}, and are building blocks for more complex cryptographic protocols~\cite{bra:cre:rob:86,kil:88,har:lin:93,may:95,cra:dam:mau:00}.

In the last decades quantum computation has played a crucial role in the development of cryptographic analysis. 
The breakthrough of quantum computation in the realm of cryptography is due to Shor's factoring algorithm~\cite{sho:97}.
This algorithm compromises the security of most common public key cryptographic schemes as factoring becomes feasible with a quantum computer. 
Since today's technology is evolving to be able to deal with a larger number of qubits, a possibility of having affordable and reliable quantum computers in the future arises, compromising the secrecy of communications and transactions.
To overcome this possibility, researchers have focused on the development of protocols that are resilient to quantum adversaries. This effort started with key distribution protocol proposed in \cite{ben:bra:84} that was shown to be secure by Shor and Preskill \cite{sho:pre:00}. This result boosted the development of other types of quantum cryptographic protocols (see for instance~\cite{chi:08,mat:wat:10}).
One such example is the quantum version oblivious transfer proposed by Bennet, Brassard, Cr\'epeau and Skubiszewska~\cite{ben:bra:cre:sku:91}.
However, as perfectly secure (quantum) bit commitment and (quantum) oblivious transfer are known to be impossible~\cite{may:97,lo:chau:96}, 
all the schemes proposed for these primitives will necessarily assume a tradeoff between cheating strategies or will require some computational security assumptions.

In this paper we present a polynomial time quantum oblivious transfer protocol based on a presumed polynomial hard problem even for a quantum computer,\footnote{A problem is polynomially hard if there is no polynomial time algorithm for solving it.} the {\em Quantum State Computational Distinguishability with Fully Flipped Permutations}, denoted by $\QSCDff$, presented in \cite{kaw:kos:nis:yam:05}.  
It is the standard problem of quantum state distinguishability, applied to two particular quantum states: $\otimes_{i=1}^{k(n)}\frac{1}{\sqrt{2}}(\ket{\sigma_i}+\ket{\sigma_i\circ\pi})$ and $\otimes_{i=1}^{k(n)}\frac{1}{\sqrt{2}}(\ket{\sigma_i}-\ket{\sigma_i\circ\pi})$, where $\sigma_i\in\S_n,\pi\in\K_n$ are permutations ($\K_n$ is the set of all permutation in $\S_n$  of order 2, such that $\pi(i)\neq i$ for all $i$), and $k$ is some polynomial, such that for each state the array of  $\sigma_i$s is chosen at random. 
In the same paper, the authors discuss how one can explore the indistinguishability of these quantum states in the scope of cryptography and propose a cryptographic public key scheme that, under the hardness assumption of $\QSCDff$, is secure against polynomial quantum adversaries using $\pi$ as a trapdoor.

The rest of this paper is organized as follows: 
in the next section we present the basic notions, problems, notation and results used in the rest of the paper. 
We also show the equivalence between a previously known algorithm, used in \cite{kaw:kos:nis:yam:05}, and a particular orthogonal measurement that is used to prove the security of our protocol, under the assumption that $\QSCDff$ is polynomially hard for quantum computers. 	
In Section~\ref{sec:protocol} we present our protocol for oblivious transfer and prove its correctness and security.
In Section~\ref{sec:conclusion} we present the conclusions.

\section{Preliminaries}
We use a binary alphabet $\Sigma = \{0,1\}$ and strings of length $\ell$, which are elements of $\Sigma^\ell$.
The group $\S S_n$ is the set of all permutations over the set $\{1,\dots,n\}$, whose elements we denote by Greek letters $\sigma,\tau,\delta,\dots$, together with the composition operation $\circ$.\footnote[1]{Formally, each permutation $\sigma\in \S_n$ is a bijective function over $\{1,\dots,n\}$.}
In the rest of the paper we assume $n$ to be of the form $2(2m+1)$, for some $m\in\mathbbm{N}$.
\begin{example}\label{ex:one}
	Consider $n=6$ and $\sigma\in \S_6$ defined as 
	$\sigma(1)=2,\sigma(2)=3,\sigma(3)=1,\sigma(4)=5,\sigma(5)=4$ and $\sigma(6)=6$. 
	We represent this permutation as $\sigma=(1\ 2\ 3)(4\ 5)$, where $(1\ 2 \ 3)$ and $(4\ 5)$ represent the orbits of elements of $\{1,\dots 6\}$. The orbit of $i\in\{1,\dots,n\}$ with respect to $\sigma$ is $(i\ \sigma(i)\ \dots\ \sigma^j(i))$, where the superscript is the number of times $\sigma$ is applied to element $i$, and $j$ is the smallest integer such that $\sigma(\sigma^j(i))=i$.
	
	Notice that this representation is not unique, and we can represent this same $\sigma$ as 
	$(4\ 5)(1\ 2\ 3)$ or $(5\ 4)(2\ 3\ 1)$.
\end{example}

Given a permutation $\sigma\in\S_n$, other than the identity, one can decompose it into a sequence of transpositions, \textit{i.e.}, elementary permutations that only exchange two elements.
It is easy to see that such decomposition is not unique, but the number of transpositions, denoted by $\#(\sigma)$, has always the same {\em parity}  and hence one can define the sign of a permutation $\sigma$ as 
$\sgn(\sigma)=(-1)^{\#(\sigma)}$. 
\begin{example}
	Consider $\sigma$ as defined in Example~\ref{ex:one}. 
	Three possible decompositions of $\sigma$ in terms of transpositions (derived from the three given representations) are
	$(1\ 3)(1\ 2)(4\ 5)$ and 
	$(4\ 5)(1\ 3)(1\ 2)$ and 
	$(5\ 4)(2\ 1)(2\ 3)$, 
	and all of them have parity~1.
	
	One can, in fact, show that if a permutation $\sigma$ generates $L$ orbits of elements from $\{1,\dots,n\}$  of lengths 
	$\ell_1,\dots,\ell_L$ 
	then $\#(\sigma)=\sum_{i=1}^L (\ell_i-1)$.
	In our case, the representation of $\sigma$ in Example~\ref{ex:one} has orbits of length 3 and 2, 
	hence $\#(\sigma)=(3-1)+(2-1)=3$. 
	The same $\sigma$ was represented in this example with 3 orbits of length 2, hence 
	$\#(\sigma)=3\times(2-1)=3$. 
\end{example}

Since $|\S_n|=n!$ one needs $\log(n!) = \sum_{i=1}^n\log i \leq n\log n\in O(n\log n)$ bits to represent each $\sigma\in\S_n$. Note that $\S_n$ consists of two sets of equal size: 
$\E_n$ containing the even permutations (i.e., permutations with sign $1$), and its complement $\O_n$ consisting of all odd permutations. 
Hence $\S_n= \E_n\cup \O_n$. 

As in~\cite{kaw:kos:nis:yam:05}, we consider the following subset of $\S_ n$:
$$
\K_ n =
		\Big\{
				\pi\in \S_n : 
				\pi\circ\pi =id_n\textnormal{ and } 
				\pi(i)\neq i, \textnormal{ for all } i\in\{1,\cdots, n\}
		\Big\}.
$$ 

\begin{example}
	One can see that the permutation $\sigma$ of Example~\ref{ex:one} is not in $\K_6$ as 
	$\sigma\circ\sigma \neq id_n$, \textit{e.g.}, $\sigma\circ\sigma (1)=\sigma(2)=3\neq 1$.
	We also do not have $\sigma(i)\neq i$ for $i=6$.
	
	As an example of a permutation in $\K_6$ we have 
	$\pi=(1\ 2)(3\ 6)(4\ 6)$.
\end{example}

One can immediately see that for an even $n$, any permutation $\pi\in\K_n$ can always be decomposed into $n/2$~transpositions since 
each element $i\in\{1,\dots,n\}$ has to appear in (at least) one transposition (otherwise $\pi(i)= i$),
and appears only once, as $\pi^2=id_n$.
Given that we assumed $n=2(2m+1)$, we have that $\pi\in\K_ n$ if and only if it can be written as an odd number of transpositions and hence $\pi\in \O_n$.

A counting argument revels that the size of $\K_ n$ is:  
$$
|\K_n| 
		= \ds\left(\begin{array}{c}n\\ n/2\end{array}\right) \left(\ds \frac n 2\right)!
		=\ds{\frac{n!}{\ds \left(\frac n 2\right)!}}.
$$ 

Let $n\in \mathbbm N$ and $\pi\in \K_n$, and consider the Hilbert space $$\mathcal H_n = span\{\ket \sigma: \sigma\in \S_n\},$$ such that for all $\sigma$ and $\sigma'$, $\bracket \sigma{\sigma'}=\delta_{\sigma,\sigma'}$. 
Let $\ket{\psi^\pm_{\pi} (\sigma)} = \frac{1}{\sqrt{2}} (\ket\sigma \pm \ket{\sigma\circ\pi})$.
Notice that for every $\pi$ the set 
		$\{\ket{\psi^\pm_{\pi} (\sigma)}:\sigma\in\E_n\}$
is an orthonormal basis.

Indeed,  we have that 
$$\bracket{\psi^\pm_{\pi} (\sigma)}{\psi^\pm_{\pi} (\sigma')}= 
		\frac 1 2 \left(\bracket{\sigma}{\sigma'} \pm \bracket{\sigma}{\sigma'\circ\pi} \pm \bracket{\sigma\circ\pi}{\sigma'} + \bracket{\sigma\circ\pi}{\sigma'\circ\pi}\right)$$ 
is either $1$, when $\sigma=\sigma'$, or $0$ otherwise.
This is because $\sigma,\sigma'\in\E_n$ and $\pi\in\O_n$, and consequently $\bracket{\sigma}{\sigma'\circ\pi} = \bracket{\sigma\circ\pi}{\sigma'} = 0$. 

On the other hand 
$$\bracket{\psi^\pm_{\pi} (\sigma)}{\psi^\mp_{\pi} (\sigma')}= 
		\frac 1 2 \left(\bracket{\sigma}{\sigma'} 
			\mp \bracket{\sigma}{\sigma'\circ\pi} 
			\pm \bracket{\sigma\circ\pi}{\sigma'} 
			- \bracket{\sigma\circ\pi}{\sigma'\circ\pi}\right)$$ 
is always $0$:
when $\sigma=\sigma'$, the first and last terms cancel each other, while the other two terms are always zero. Since
$\ket{\psi^\pm_{\pi} (\sigma)}= \pm\ket{\psi^\pm_{\pi} (\sigma\circ\pi)}$, the set $\{\ket{\psi^\pm_{\pi} (\sigma)}:\sigma\in\O_n\}$ is also an orthonormal basis.

Consider the following quantum states defined in~\cite{kaw:kos:nis:yam:05}:
$$
\begin{array}{rcl}
\rho^+_\pi
	&=& \ds \frac{1}{2n!} \sum_{\sigma \in \S_n}(\ket\sigma + \ket{\sigma \circ\pi})(\bra\sigma + \bra{\sigma \circ\pi}) \\[10pt]
	&=& \ds  \frac{1}{n!}\sum_{\sigma \in \S_n} \ket{\psi^+_{\pi} (\sigma)}\bra{\psi^+_{\pi} (\sigma)}\\[10pt]
	&=& \ds\frac{2}{n!}\sum_{\sigma \in \E_n} \ket{\psi^+_{\pi} (\sigma)}\bra{\psi^+_{\pi} (\sigma)}
\end{array}
$$
and
$$
\begin{array}{rcl}
\rho^-_\pi
	&=& \ds \frac{1}{2n!} \sum_{\sigma \in \S_n}(\ket\sigma - \ket{\sigma \circ\pi})(\bra\sigma - \bra{\sigma \circ\pi})\\[10pt]
	&=& \ds\frac{1}{n!}\sum_{\sigma \in \S_n} \ket{\psi^-_{\pi} (\sigma)}\bra{\psi^-_{\pi} (\sigma)}\\[10pt]
	&=& \ds \frac{2}{n!}\sum_{\sigma \in \E_n} \ket{\psi^-_{\pi} (\sigma)}\bra{\psi^-_{\pi} (\sigma)}.
\end{array}
$$ 

We are interested in these particular states as they are orthogonal to each other and hence fully distinguishable, {\it provided one knows} which $\pi$ was used to prepare them. 
Without the knowledge of $\pi$ the problem of distinguishing these states is believed to be polynomially hard even for quantum computers, as stated in~\cite{kaw:kos:nis:yam:05}. First, we state the problem:

\begin{problem}[$\QSCDff$]
The {\em Quantum State Computational Distinction with Fully Flipped Per\-mutations Problem}, denoted by $\QSCDff$, is defined as:
\begin{description}
	\item[\hspace{5mm} Instances:] 
		Two quantum states 
			$(\rho^+_\pi)^{\otimes k(n)}$ and $(\rho^-_\pi)^{\otimes k(n)}$ 
		where $n=2(2m+1)$ for some $m\in \mathbbm N$, and $k$ is some fixed polynomial, 
		\textit{i.e.}, each state consists of $k(n)$ copies of $\rho^+_\pi$ and $\rho^-_\pi$, respectively. 
	
	\item[\hspace{5mm} Question:] 
		Are 
			$(\rho^+_\pi)^{\otimes k(n)}$ and $(\rho^-_\pi)^{\otimes k(n)}$ 
		computationally indistinguishable, 
		\textit{i.e.}, is the probability of a quantum polynomial time algorithm to be able to distinguish between the states $(\rho^+_\pi)^{\otimes k(n)}$ and $(\rho^-_\pi)^{\otimes k(n)}$ a negligible function?
\end{description}
\end{problem}

The problem $\QSCDff$ is closely related to the hidden subgroup problem over symmetric groups for which no one knows an efficient quantum algorithm to solve it. 
In~\cite{kaw:kos:nis:yam:05} the authors reduced the $\QSCDff$ problem to a variant of the unique graph automorphism problem, that is also presumably hard for quantum computers with polynomial time resources, proving the following hardness result:

\begin{theorem}[\cite{kaw:kos:nis:yam:05}]\label{theo:hardness}
	If there exists a polynomial-time quantum algorithm that solves $\QSCDff$ with non-negligible advantage, then there exists a polynomial-time quantum algorithm that solves the graph automorphism problem in the worst case for infinitely-many input lengths.
\end{theorem}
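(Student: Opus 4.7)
The plan is to construct, from any hypothetical polynomial-time quantum distinguisher $\mathcal{A}$ for $\QSCDff$ with non-negligible advantage, a polynomial-time quantum algorithm deciding $\GA$ in the worst case for infinitely many input lengths. I would carry this out via two chained reductions: $\GA$ to a restricted variant in which the only possible nontrivial automorphism is a fixed-point-free involution in $\K_n$, and then that restricted variant to $\QSCDff$.

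For the first reduction, I would use standard graph gadget constructions together with a random self-reducibility argument. Rigid markers attached to vertices, paired vertex duplications, and size padding allow one to transform any $\GA$ instance into a graph on $n=2(2m+1)$ vertices whose nontrivial automorphism, if any, is forced to lie in $\K_n$. The self-reducibility step, in which the input graph is uniformly relabeled before the reduction is applied, lifts an average-case $\QSCDff$ advantage into a worst-case conclusion for $\GA$ on infinitely many input lengths, at the cost of polynomially many invocations.

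For the second reduction, given such a graph $G$ on $n$ vertices with at most one nontrivial automorphism $\pi\in\K_n$, I would prepare the state $\frac{1}{\sqrt{n!}}\sum_{\sigma\in\S_n}\ket{\sigma}\ket{\sigma(G)}$ and measure the second register to obtain a labeled copy $H$ of $G$. The first register then collapses to the uniform superposition over the coset $\{\sigma:\sigma(G)=H\}$. When $G$ admits $\pi$, the coset has size two with exactly one even element $\sigma_0\in\E_n$, and the post-measurement state is precisely $\ket{\psi^+_\pi(\sigma_0)}$; averaging over $H$ produces $\rho^+_\pi$. The state $\rho^-_\pi$ is obtained from the same circuit with the sign $\sgn(\sigma)$ inserted as a relative phase before the measurement: because $\pi\in\O_n$, the two coset members acquire opposite signs, and the conditional state becomes $\ket{\psi^-_\pi(\sigma_0)}$. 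Repeating yields $k(n)$ copies of each, which can then be fed to $\mathcal{A}$.

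The main obstacle is translating $\mathcal{A}$'s bias back into a $\GA$ decision. In the non-rigid case the two preparations genuinely produce $\rho^+_\pi$ and $\rho^-_\pi$, so $\mathcal{A}$'s advantage is non-negligible by assumption; in the rigid case the coset is a singleton and both preparations produce (up to a global sign) the same basis state $\ket{\sigma_0}$, on which $\mathcal{A}$ cannot distinguish them. The delicate part is quantifying this gap: one needs a hybrid argument across the $k(n)$ tensor factors to show that $\mathcal{A}$'s output distribution on a rigid-graph preparation cannot accidentally mimic its behavior on $\rho^-_\pi$ with the same bias it exhibits on $\rho^+_\pi$, and that the worst-case-to-average-case losses from the first stage do not collapse a non-negligible $\QSCDff$ advantage into a negligible $\GA$ advantage.
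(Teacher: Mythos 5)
The paper does not prove this theorem at all: it is imported verbatim from \cite{kaw:kos:nis:yam:05}, so there is no in-paper argument to compare yours against. Judged against the actual argument in that reference, your sketch has the right architecture: a classical reduction from $\GA$ to a ``unique, fixed-point-free involution'' promise version on $n=2(2m+1)$ vertices, followed by the coset-state construction $\frac{1}{\sqrt{n!}}\sum_\sigma \ket{\sigma}\ket{\sigma(G)}$, measurement of the second register to obtain $\ket{\psi^+_\pi(\sigma_0)}$ (hence $\rho^+_\pi$ on average) in the non-rigid case, and the $\csignal$ phase trick to produce $\rho^-_\pi$. That is indeed how the reduction goes. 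One point you should make explicit: the average-case-to-worst-case step works because all elements of $\K_n$ share the same cycle type and are therefore a single conjugacy class, so uniformly relabeling the graph sends the hidden automorphism $\pi$ to a uniformly random element of $\K_n$; this is what lets a distinguisher with advantage only ``on average over $\pi$'' be used against an arbitrary worst-case instance.

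Your final paragraph, however, misdiagnoses where the difficulty lies. No hybrid argument across the $k(n)$ tensor factors is needed, because in the rigid case the two preparations do not merely ``produce the same basis state up to a global sign'' on each run --- they produce \emph{identical density matrices}, namely $(\I)^{\otimes k(n)}$ with $\I=\frac{1}{n!}\sum_\sigma\ket{\sigma}\bra{\sigma}$, so the distinguisher's output distributions on the two preparations are statistically equal, for any algorithm whatsoever. The decision procedure is then elementary: run the distinguisher polynomially many times on each preparation, compare empirical acceptance frequencies, and apply a Chernoff bound; a gap appears exactly when the graph is non-rigid, and the ``infinitely many input lengths'' clause is inherited directly from the definition of non-negligible advantage. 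So the step you flag as delicate is actually the easy part; the genuinely laborious part of the proof, which you pass over in one sentence, is the gadget construction establishing $\GA \le \uniqueGA$ with the automorphism forced into $\K_n$ on a vertex set of the required size.
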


The interesting property that makes this problem suitable for cryptography is that it has a trapdoor that allows one to efficiently distinguish the states: one can distinguish, with certainty, 
	$(\rho^+_\pi)^{\otimes k(n)}$ from $(\rho^-_\pi)^{\otimes k(n)}$, 
provided that an extra piece of information is given, in this case $\pi$. 
Furthermore, when using a permutation $\pi'$ different from the trapdoor $\pi$, the probability of distinguishing these states is the same as plain guessing. In order to present the protocol and analyze its complexity and security, we need the following obvious proposition:

\newcommand\csignal{C_{\textit{sgn}}}
\newcommand\cswap{C_{\textit{swap}}}

\begin{proposition}\label{prop:operations}
	The following linear operators are unitary:
	\begin{itemize}
	\item $\cpi((\ket 0+\ket 1) \ket\sigma)=\ket 0 \ket{\sigma}+ \ket 1 \ket{\sigma\circ\pi}$;
	
	\item $\cone(\ket 0 \ket\sigma +\ket 1 \ket\pi)=\ket 0( \ket{\sigma}+  \ket{\pi})$;
	
	\item $C^r_\circ(\ket\phi\ket\psi)=\ket\phi\ket{\phi\circ\psi}$ and $C^l_\circ(\ket\phi\ket\psi)=\ket\phi\ket{\psi\circ\phi}$;
	
	\item $\cswap(\ket\phi \ket\psi)=\ket\psi \ket\phi$;
	
	\item $\csignal(\ket\alpha)=(-1)^{\sgn(\alpha)}\ket\alpha$.
	\end{itemize}
\end{proposition}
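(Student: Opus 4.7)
The plan is to verify each operator by specifying its action on an orthonormal basis of the ambient Hilbert space and showing that this action is either a bijection of that basis onto itself or diagonal with unit-modulus eigenvalues; unitarity then follows by linear extension, since preserving the inner product on a basis suffices.

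The operator $\csignal$ is the easiest: it is diagonal in the basis $\{\ket\sigma:\sigma\in\S_n\}$ with eigenvalues $(-1)^{\sgn(\sigma)}\in\{\pm 1\}$, hence self-adjoint and unitary. For the remaining permutation-based operators the essential fact is that, for any fixed $\pi\in\S_n$, both $\sigma\mapsto\sigma\circ\pi$ and $\sigma\mapsto\pi\circ\sigma$ are bijections of $\S_n$, with inverses given by composition with $\pi^{-1}$ on the appropriate side. Applied to the computational basis this shows that $\cpi$ (defined on basis states by $\ket 0\ket\sigma\mapsto\ket 0\ket\sigma$ and $\ket 1\ket\sigma\mapsto\ket 1\ket{\sigma\circ\pi}$), $C^r_\circ$, $C^l_\circ$ and $\cswap$ each permute their respective computational bases, hence are unitary. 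The identities in the proposition then follow by linearity; for example, applying $\cpi$ term by term to $(\ket 0+\ket 1)\ket\sigma$ yields $\ket 0\ket\sigma+\ket 1\ket{\sigma\circ\pi}$, and $C^r_\circ$ has inverse $\ket\phi\ket\psi\mapsto\ket\phi\ket{\phi^{-1}\circ\psi}$.

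The only subtle case is $\cone$. The displayed equation $\cone(\ket 0\ket\sigma+\ket 1\ket\pi)=\ket 0(\ket\sigma+\ket\pi)$ does not by itself determine a unitary on the full $2\cdot n!$-dimensional space: as $\sigma$ varies (with $\pi$ fixed) the input states on the left fail to be pairwise orthogonal, so one cannot directly read off a basis permutation. I would resolve this by interpreting $\cone$ as a controlled-uncompute operation tied to $\cpi$, whose extension to the whole computational basis is consistent with the displayed identity on the relevant subspace used later in the protocol, and then verify unitarity by exhibiting the corresponding basis bijection. This interpretational step is the main obstacle; once $\cone$ is pinned down at the basis level, the verification reduces, as for the other four operators, to the observation that composition with a fixed permutation is a bijection of $\S_n$, so the entire proposition is ``obvious'' in the sense the authors claim.
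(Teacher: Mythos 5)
The paper itself offers no proof of this proposition --- it is introduced with ``we need the following obvious proposition'' and left entirely to the reader --- so there is nothing to compare your argument against except the claim of obviousness. For four of the five operators your argument is correct and complete: $\csignal$ is diagonal in the basis $\{\ket\sigma : \sigma\in\S_n\}$ with eigenvalues of modulus one, and $\cpi$, $C^r_\circ$, $C^l_\circ$ and $\cswap$ are each permutations of the computational basis because composition with a fixed permutation (on either side) is a bijection of $\S_n$. You are also right to single out $\cone$ as the one genuinely underdetermined item: the displayed identity only constrains the operator on a non-orthogonal family of states spanning a proper subspace, so it does not by itself define a unitary.

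The gap is that you stop at announcing how you \emph{would} pin $\cone$ down (``a controlled-uncompute operation tied to $\cpi$'') without exhibiting the basis bijection, which is the only nontrivial content of the whole proposition. The resolution is concrete and does not require tying $\cone$ to $\cpi$ or to the secret $\pi$ at all: define $\cone\ket{b}\ket{\tau}=\ket{b\oplus p(\tau)}\ket{\tau}$, where $p(\tau)\in\{0,1\}$ is the parity of the number of transpositions $\#(\tau)$. This is an involution on the computational basis, hence a basis permutation and thus unitary, and $p$ is computable in classical polynomial time, consistent with the efficiency claims surrounding Algorithm~2.6. It satisfies $\cone(\ket 0\ket\sigma+\ket 1\ket\pi)=\ket 0(\ket\sigma+\ket\pi)$ precisely when $\sigma\in\E_n$ and $\pi\in\O_n$, which is the only situation in which the paper ever invokes it: in Algorithm~2.6 the second register holds $\sigma=id_n\in\E_n$ and $\pi\in\K_n\subseteq\O_n$ (the paper establishes $\K_n\subseteq\O_n$ from $n=2(2m+1)$). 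Without this restriction the displayed identity is actually false for a unitary --- taking $\sigma,\pi$ of equal parity would force the non-injective assignment $\ket 1\ket\pi\mapsto\ket 0\ket\pi$ --- so stating the parity hypothesis is not optional bookkeeping but the point on which the proposition's correctness turns.
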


The quantum algorithm presented in~\cite{kaw:kos:nis:yam:05}, Algorithm \ref{alg:keysgeneration}, justifies that given the private key $\pi$ one can produce the states $\ds (\rho^+_\pi)^{\otimes k(n)}$ 
in {\em polynomial time} 
using only Hadamard $H$ and the operations defined in Proposition \ref{prop:operations}.

\begin{algorithm}
	[to generate $\ds (\rho^+_\pi)^{\otimes k(n)}$~\cite{kaw:kos:nis:yam:05}]
	\label{alg:keysgeneration}
	For the sake of simplicity of the presentation we consider a fixed $n$ and the case where $k(n)=1$. The reader can easily generalize the argument for $k(n)$ systems.
	\begin{quote}
		\begin{description}
			\item[Input:] $\pi\in \K_n$.
			\item[Output:] $\ds \rho^+_\pi= \frac{2}{n!} \sum_{\sigma\in \E_n}(\ket \sigma+ \ket{\sigma \circ\pi})(\bra \sigma+ \bra{\sigma\circ \pi})$. 
				\begin{senum}
					\item Select a random $\sigma\in\S_n$ and prepare the initial state 
							$\ket 0\otimes\ket{id_n}\otimes\ket{\sigma} \in \mathbbm C^2\otimes \mathbbm C^{n\log n}\otimes \mathbbm C^{n\log n}$.
					\item Apply $H\otimes\I\otimes\I$.
					\item Apply $\cpi\otimes\I$.
					\item Apply $\cone\otimes\I$.
					\item Apply $\I\otimes\cswap$.
					\item Apply $\I\otimes C^r_\circ$.
				\end{senum}
		\end{description}
	\end{quote}
	The third register is now in the state
	$\ket{\psi^+_{\pi} (\sigma)} = \frac{1}{\sqrt{2}} (\ket\sigma + \ket{\sigma\circ\pi})$. Since $\sigma$ is chosen at random, the ensemble of such systems is represented by the mixed state $\rho^+_\pi$.
\end{algorithm}

Note that it is possible to prepare the state $\rho^+_\pi$ as a partial trace of an entangled state.  
In order to do that, one should start from 
$\ket 0 \ket{id_n} \frac{1}{\sqrt{n!}} \sum_{\sigma\in\S_n}\ket\sigma$. 
Upon applying Steps 1 to 6 of the previous algorithm the overall state is 
$\ket 0 \frac{1}{\sqrt{n!}} \sum_{\sigma\in\S_n}\ket\sigma \frac{1}{\sqrt{2}}(\ket\sigma + \ket{\sigma\circ\pi})$ and the third register is again in the state $\rho^+_\pi$. 

In the next lemma we show that one can easily transform $\rho_\pi^+$ into $\rho_\pi^-$, and vice versa, without knowing $\pi$. This property follows immediately from the fact that $\pi$ is an odd permutation. 

\begin{lemma}[\cite{kaw:kos:nis:yam:05}]
\label{alg:conversion}
There exists a {\em polynomial-time} quantum algorithm that, with probability $1$, transforms $\ds\rho_\pi^+$ into $\rho_\pi^-$ and keeps $ \I = \frac{1}{n!} \sum_{\sigma\in\S_n} \ket\sigma\bra\sigma$ invariant, for any $n=2(2m+1)$, with $m \in\nats$, and any permutation $\pi\in\K_n$.
\end{lemma}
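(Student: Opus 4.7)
The plan is to act on $\rho^+_\pi$ with a single application of the unitary $\csignal$ from Proposition~\ref{prop:operations}, which tags each basis vector $\ket\sigma$ with the sign of $\sigma$. The reason this should suffice is the parity asymmetry built into the setup: since $n=2(2m+1)$, every $\pi\in\K_n$ decomposes into $2m+1$ disjoint transpositions and is therefore in $\O_n$, so composing with $\pi$ flips the sign of any permutation.

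First I would compute the action of $\csignal$ on a basis vector $\ket{\psi^+_\pi(\sigma)}$ with $\sigma\in\E_n$. Since $\sigma\circ\pi\in\O_n$, the two summands acquire opposite signs, giving
$\csignal\ket{\psi^+_\pi(\sigma)} = \frac{1}{\sqrt{2}}(\ket\sigma - \ket{\sigma\circ\pi}) = \ket{\psi^-_\pi(\sigma)}$,
so that $\csignal\ket{\psi^+_\pi(\sigma)}\bra{\psi^+_\pi(\sigma)}\csignal^\dagger = \ket{\psi^-_\pi(\sigma)}\bra{\psi^-_\pi(\sigma)}$. Averaging over $\sigma\in\E_n$ in the third representation of $\rho^\pm_\pi$ derived in the text yields $\csignal\,\rho^+_\pi\,\csignal^\dagger = \rho^-_\pi$; the same computation with $-$ in place of $+$ shows that $\csignal$ is its own inverse on these states, so it also sends $\rho^-_\pi$ back to $\rho^+_\pi$. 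Invariance of $\I$ is immediate, because $\csignal$ is diagonal in the $\{\ket\sigma\}$ basis with entries $\pm 1$ and $\I = \frac{1}{n!}\sum_\sigma\ket\sigma\bra\sigma$ is the uniform mixture over that basis, so $\csignal\,\I\,\csignal^\dagger = \I$.

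The only nontrivial point is efficiency. I would realise $\csignal$ by the standard phase-kickback recipe: prepare an ancilla in $\frac{1}{\sqrt{2}}(\ket 0 - \ket 1)$, coherently XOR into it a bit encoding the parity of $\sigma$, and uncompute the ancilla. The parity bit can be computed classically in polynomial time from the $O(n\log n)$-bit representation of $\sigma$, for instance by counting inversions in $O(n^2)$ steps or by extracting the cycle decomposition in $O(n)$ and using $\#(\sigma)=\sum_i(\ell_i-1)$ as recalled earlier; the reversible quantum simulation inherits the same polynomial cost. A single invocation of $\csignal$ therefore suffices, and the total runtime is polynomial in $n$. The main thing to keep straight is the sign bookkeeping; beyond that the argument is essentially forced by the odd parity of $\pi$.
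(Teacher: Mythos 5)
Your proposal is correct and follows essentially the same route as the paper: both apply the single unitary $\csignal$ and use the fact that $\pi\in\O_n$ to flip the relative sign inside $\ket{\psi^+_\pi(\sigma)}$, with invariance of $\I$ and polynomial-time computability of the parity handled the same way. Your restriction to $\sigma\in\E_n$ even tidies away the harmless global phase $(-1)^{\sgn(\sigma)}$ that appears in the paper's version, and the added implementation detail for $\csignal$ is a welcome but inessential elaboration.
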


\begin{proof}
Let $\pi\in\K_n$ be a permutation and consider a pure quantum state $\ket{\psi^+_{\pi} (\sigma)}$. The desired algorithm  implements the transformation:
$$\begin{array}{rcl}
\ket{\psi^+_{\pi} (\sigma)} 
	&=& \ds\frac{\ket\sigma +\ket{\sigma \circ \pi}}{\sqrt{2}} \\ 
	& \ds\stackrel{\csignal}{\longrightarrow} &\ds \frac{(-1)^{\sgn(\sigma)}\ket\sigma+(-1)^{\sgn(\sigma)+1}\ket{\sigma \circ \pi} }{\sqrt{2}}\\
	&=& (-1)^{\sgn(\sigma)}|\psi^-_{\pi} (\sigma)\rangle.
\end{array}$$
Notice that determining the sign of a permutation is a computation that can be done in polynomial time. Furthermore, it is easy to see that this algorithm leaves $\I$ invariant.\qed
\end{proof}

The following algorithm explores the trapdoor property of $\QSCDff$: given~$\pi$, 
one can distinguish in {\em polynomial time} and with probability 1 the quantum states 
$(\rho^+_\pi)^{\otimes k(n)}$ 
and 
$(\rho^-_\pi)^{\otimes k(n)}$.

\begin{algorithm}
	[to distinguish $(\rho^+_\pi)^{\otimes k(n)}$ from $(\rho^-_\pi)^{\otimes k(n)}$~\cite{kaw:kos:nis:yam:05}]
	\label{alg:distinction}
	For the sake of simplicity of the presentation we consider a fixed $n$ and the case where $k(n)=1$. 
	The reader can easily generalize the argument for $k(n)$ systems.
	\begin{quote}
		\begin{description}
			\item[Input:]  	$\pi\in\K_n$ and a quantum state 
						$\chi_\pi$ that is either $\rho^+_\pi$ or $\rho^-_\pi$. 
			\item[Output:] $0$ if $\chi_\pi =  \rho^+_\pi$, and $1$ if $\chi_\pi =  \rho^-_\pi$.
				\begin{senum}
					\item Prepare the system in the state $\ket 0 \bra 0\otimes \chi_\pi$, where $\ket 0 \in\mathbbm C^2$.
					\item Apply $H\otimes \I$.
					\item Apply $C_\pi$.\label{prot:distinction-step3}
					\item Apply again $H\otimes \I$. 
					\item Measure $M_{+} =(0\cdot \ket 0 \bra 0 + 1\cdot\ket 1\bra 1)\otimes \I$ and output the result. \label{step:5:alg}
				\end{senum}
		\end{description}
	\end{quote}
\end{algorithm}
\begin{proof}
After performing the second step the obtained state is 
$\frac{1}{2}\ket + \bra + \otimes \chi_\pi$, 
with 
$\ket\pm = \frac{1}{\sqrt 2}(\ket{0} \pm \ket 1)$. 
Since 
$\cpi(\ket +(\ket \sigma  \pm \ket{\sigma\circ\pi}))
		= \ket \pm(\ket \sigma  \pm \ket{\sigma\circ\pi})$, 
the overall state after the third step is 
$\ket\pm\bra\pm  \otimes \rho^\pm_\pi$.
So, by applying Hadamard on the first register and measuring it in the computational basis we get the desired outcome. \qed
\end{proof}

We have shown that the above algorithm is able to distinguish with certainty between 
	$\chi_\pi=(\rho^+_\pi)^{\otimes k(n)}$ and $\chi_\pi=(\rho^-_\pi)^{\otimes k(n)}$, using only polynomial quantum resources, 
provided that we use the correct $\pi$.
We will show later that if $\chi_{\pi'}$ is created with a different permutation $\pi'\neq\pi$, 
then the answer will be a random variable with distribution close to uniform 
(see Theorem~\ref{theo:correctness} for details).

Algorithm~\ref{alg:distinction} provides a computational approach to the problem of distinguishing the two states and is suitable for defining the public key scheme presented in \cite{kaw:kos:nis:yam:05}.
In our work, due to the nature of the problem at hand, we will instead consider measurements.
We will show next that the result of Algorithm~\ref{alg:distinction} is equivalent to measuring the orthogonal observable 
$$
M_\pi = 0 \cdot P_\pi^+ + 1 \cdot P_\pi^-
$$ 
where 
$$ P_\pi^\pm 	=\sum_{\sigma \in \E_n} \ket{\psi^\pm_{\pi} (\sigma)}\bra{\psi^\pm_{\pi} (\sigma)} 
			= \frac{1}{2} \sum_{\sigma \in \E_n}(\ket\sigma \pm \ket{\sigma \circ\pi})(\bra\sigma \pm \bra{\sigma \circ\pi}).
$$

\begin{proposition}\label{prop:measuredistinction}
	Applying Algorithm \ref{alg:distinction} with $\pi$ and measuring $M_\pi = 0 \cdot P_\pi^+ + 1 \cdot P_\pi^-$ are equivalent processes:
	the probability distribution of the outcomes of the Algorithm~\ref{alg:distinction} 
	is the same as the probability distribution of the outcomes of the measurement 
	$M_\pi$, and
	the resulting states are the same.
\end{proposition}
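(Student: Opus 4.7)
The plan is to show that the unitary portion of Algorithm~\ref{alg:distinction}, namely $U := (H\otimes\I)\,C_\pi\,(H\otimes\I)$ comprising Steps 2--4, coherently copies the ``$\pm$ label'' of the data register into the ancilla qubit. Once this is in hand, measuring the ancilla in the computational basis in Step 5 becomes tautologically equivalent to applying the projective measurement $M_\pi = 0\cdot P_\pi^+ + 1\cdot P_\pi^-$ directly to the data register, and the stated equality of outcome distributions and post-measurement states follows at once.

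First I would verify by direct computation on the orthonormal basis $\{\ket{\psi^+_\pi(\sigma)},\ket{\psi^-_\pi(\sigma)}\}_{\sigma\in\E_n}$ (established just above Algorithm~\ref{alg:keysgeneration}) that
$$U\bigl(\ket 0\otimes\ket{\psi^+_\pi(\sigma)}\bigr) = \ket 0\otimes\ket{\psi^+_\pi(\sigma)},\qquad U\bigl(\ket 0\otimes\ket{\psi^-_\pi(\sigma)}\bigr) = \ket 1\otimes\ket{\psi^-_\pi(\sigma)}.$$
The key algebraic ingredient is $\pi\circ\pi=id_n$, which together with the definition of $C_\pi$ in Proposition~\ref{prop:operations} gives $C_\pi(\ket 1\otimes\ket{\sigma\circ\pi})=\ket 1\otimes\ket\sigma$; combining this with $C_\pi(\ket 0\otimes\ket\sigma)=\ket 0\otimes\ket\sigma$ shows that $C_\pi$ fixes $\ket+\otimes\ket{\psi^+_\pi(\sigma)}$ and sends $\ket+\otimes\ket{\psi^-_\pi(\sigma)}$ to $\ket-\otimes\ket{\psi^-_\pi(\sigma)}$; the final Hadamard then collapses $\ket\pm$ to $\ket 0$ and $\ket 1$ respectively.

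Next, I would extend by linearity. Writing an arbitrary pure input of the data register as $\ket\varphi = \sum_{\sigma\in\E_n}\bigl(\alpha_\sigma\ket{\psi^+_\pi(\sigma)}+\beta_\sigma\ket{\psi^-_\pi(\sigma)}\bigr)$, the joint state just before Step~5 equals
$$\ket 0\otimes\Bigl(\sum_{\sigma\in\E_n}\alpha_\sigma\ket{\psi^+_\pi(\sigma)}\Bigr) \,+\, \ket 1\otimes\Bigl(\sum_{\sigma\in\E_n}\beta_\sigma\ket{\psi^-_\pi(\sigma)}\Bigr).$$
The Born rule therefore yields outcome $0$ (resp.~$1$) with probability $\|P_\pi^+\ket\varphi\|^2$ (resp.~$\|P_\pi^-\ket\varphi\|^2$) and leaves the data register in the normalized projection $P_\pi^\pm\ket\varphi/\|P_\pi^\pm\ket\varphi\|$, which is exactly the statistics and state update dictated by $M_\pi$. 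Promotion from pure inputs to the mixed $\chi_\pi$ is immediate by convexity, since both processes are linear in the input density operator.

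The only step that demands any real care is the first one, where the action of $C_\pi$ on both $\ket\sigma$ and $\ket{\sigma\circ\pi}$ must be tracked and closed using $\pi\circ\pi=id_n$; everything after that is bookkeeping, and I do not anticipate any substantive obstacle.
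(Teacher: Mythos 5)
Your proposal is correct and takes essentially the same route as the paper: both push $\ket 0\otimes\ket\psi$ through $(H\otimes\I)C_\pi(H\otimes\I)$ using the fact that $C_\pi$ fixes $\ket+\otimes\ket{\psi^+_\pi(\sigma)}$ and maps $\ket+\otimes\ket{\psi^-_\pi(\sigma)}$ to $\ket-\otimes\ket{\psi^-_\pi(\sigma)}$ (via $\pi\circ\pi=id_n$), arriving at $\ket 0\,P_\pi^+\ket\psi+\ket 1\,P_\pi^-\ket\psi$ before the final measurement, and then invoke linearity to pass from pure to mixed inputs. The only cosmetic difference is that you first verify the action on the basis $\{\ket{\psi^\pm_\pi(\sigma)}\}_{\sigma\in\E_n}$ and extend by linearity, whereas the paper carries a general superposition through all four steps at once.
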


As in Algorithm~\ref{alg:distinction}, we consider a fixed $n$ and the case where $k(n)=1$. 
The reader can easily generalize the argument for $k(n)$ systems using $M_\pi^{\otimes k(n)}$ instead of $M_\pi$.

\begin{proof}
In order to prove this proposition, first notice that by the linearity of the measurement, it is enough to show the result only for the case of a general pure $\ket \psi$. 
Since for every $\pi$ the set 
		$\{\ket{\psi^\pm_{\pi} (\sigma)}:\sigma\in\E_n\}$
is a basis, we can write
$$
\ket\psi = \sum_{\sigma\in\E_n} \left(c_\sigma^+\ket{\psi^+_{\pi} (\sigma)} + c_\sigma^-\ket{\psi^-_{\pi} (\sigma)} \right).
$$

We first compute the following:
$$
\begin{array}{rcl}
	\ds P_\pi^\pm\ket\psi
		&=&  \ds	\sum_{\sigma\in\E_n} \ket{\psi^\pm_{\pi} (\sigma)} \bra{\psi^\pm_{\pi} (\sigma)} 
				\left(\sum_{\sigma'\in\E_n} \left(c_{\sigma'}^+\ket{\psi^+_{\pi} (\sigma')} +
				 c_{\sigma'}^-\ket{\psi^-_{\pi} (\sigma')} \right)\right)\\[10pt]
%
%
		&=& \ds \sum_{\sigma\in\E_n} \sum_{\sigma'\in\E_n}\ket{\psi^\pm_{\pi} (\sigma)}  
				\left(c_{\sigma'}^+\bracket{\psi^\pm_{\pi} (\sigma)}{\psi^+_{\pi} (\sigma')} + 
				c_{\sigma'}^-\bracket{\psi^\pm_{\pi} (\sigma)}{\psi^-_{\pi} (\sigma')}\right)\\[10pt]
		&=& \ds\sum_{\sigma\in\E_n} c_{\sigma}^\pm\ket{\psi^\pm_{\pi} (\sigma)}.
\end{array}$$
The result of the first $4$ Steps of the Algorithm \ref{alg:distinction} is ($\ket{\psi^\pm_{\pi} (\sigma)}= \pm\ket{\psi^\pm_{\pi} (\sigma\pi)}$):
$$\begin{array}{rcl}
  \multicolumn{3}{l}{(H\otimes \I) C_\pi (H  \otimes \I )\ket 0\ket\psi} \nonumber\\[10pt]
	&=& \ds \frac{1}{\sqrt 2}(H\otimes \I) C_\pi ( \ket 0 + \ket 1)\ket{\psi}\nonumber\\[10pt]
	&=& \ds \frac{1}{\sqrt 2}(H\otimes \I)C_\pi \left(\ket 0\ket{\psi} + \ket 1\sum_{\sigma\in\E_n} \left(c_\sigma^+\ket{\psi^+_{\pi} (\sigma)} + c_\sigma^-\ket{\psi^-_{\pi} (\sigma)} \right)\right)\nonumber\\[10pt]
	&=&  \ds \frac{1}{\sqrt 2}(H\otimes \I) \left(\ket 0\ket{\psi} + \ket 1\sum_{\sigma\in\E_n} \left(c_\sigma^+\ket{\psi^+_{\pi} (\sigma)} - c_\sigma^-\ket{\psi^-_{\pi} (\sigma)} \right)\right)\nonumber\\[10pt]
	&=&\ds \frac{\ket 0 + \ket 1}{2}\sum_{\sigma\in\E_n} \left(c_\sigma^+\ket{\psi^+_{\pi} (\sigma)} + 
			c_\sigma^-\ket{\psi^-_{\pi} (\sigma)} \right)\\ [5mm]& & \ds\hspace{10mm} + \frac{\ket 0 - \ket 1}{2}\sum_{\sigma\in\E_n} 
			\left(c_\sigma^+\ket{\psi^+_{\pi} (\sigma)} - c_\sigma^-\ket{\psi^-_{\pi} (\sigma)} \right)\nonumber\\[10pt]
	 &=&\ds \ket 0 \sum_{\sigma\in\E_n} c_\sigma^+\ket{\psi^+_{\pi} (\sigma)} +
			 \ket 1 \sum_{\sigma\in\E_n} c_\sigma^-\ket{\psi^-_{\pi} (\sigma)}\label{eq:eq1}\\[10pt]
	 &=& \ds \ket 0 P_\pi^+\ket\psi + \ket 1 P_\pi^-\ket\psi.	
\end{array}$$
Measuring the first register in the computational basis (Step \ref{step:5:alg} of Algorithm \ref{alg:distinction}) collapses the second register (up to a normalization factor) in the state 
		$P_\pi^\pm\ket\psi$ 
with probability $||P_\pi^\pm\ket\psi||^2$, 
just as if $M_\pi$ was measured.
%
\qed
\end{proof}

An immediate corollary is that $M_\pi$ distinguishes  
		$\rho^+_\pi$ and $\rho^-_\pi$ 
with probability one.


At the end of an oblivious transfer protocol, Bob must know if he received the message or not. 
In our protocol, this step is guaranteed by a {\it universal hash function}. 
These functions map larger strings to strings of smaller size, hence collisions are unavoidable, \textit{i.e.}, different messages can be mapped to the same hash. 
Despite this fact, one can design these functions in such a way that:
\begin{itemize}
	\item they are computationally efficient, \textit{i.e.}, one can compute its value from a message in polynomial time;
	\item hashes are almost equally distributed.
 \end{itemize}

\begin{definition}
	Let $A$ and $B$ be two sets of size  $a$ and $b$, respectively, such that 
		$a>b$, and let $\mathbbm H$ be a collection of hash functions $h:A\to B$.
	$\mathcal H$ is said to be a \emph{universal family of hash functions} if 
			$$\Pr_{h\in \mathbbm H}[h(x) = h(y)] \leq\frac 1 b.$$
\end{definition} 

A straightforward consequence of this definition is the following theorem.
\begin{theorem}\label{theo:universalhash}
	Let $A$ and $B$ be two sets of size $a$ and $b$, respectively, 
	such that $a>b$ and let $\mathbbm H$ be a collection of hash functions $h:A\to B$. 
	If $\mathbbm H$ is a universal family of hash functions then for any set $A' \subset A$ of size $N$ and for any $x\in A$, 
	the expected number of collisions between $x$ and other elements in $A'$ is at most $N/b$.
\end{theorem}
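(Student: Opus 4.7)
The plan is a textbook application of linearity of expectation, combined directly with the defining inequality of a universal hash family.

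First I would fix an arbitrary $h \in \mathbbm H$ drawn uniformly at random, and, for each element $y \in A' \setminus \{x\}$, introduce the indicator random variable
\[
X_y = \begin{cases} 1 & \text{if } h(y) = h(x), \\ 0 & \text{otherwise.} \end{cases}
\]
Then the number of collisions between $x$ and other elements of $A'$ is exactly $X = \sum_{y \in A' \setminus \{x\}} X_y$. By the definition of a universal family,
\[
\E[X_y] = \pr_{h \in \mathbbm H}[h(y) = h(x)] \leq \frac{1}{b}.
\]

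By linearity of expectation applied over the (at most) $N$ terms in the sum,
\[
\E[X] \;=\; \sum_{y \in A' \setminus \{x\}} \E[X_y] \;\leq\; \frac{|A' \setminus \{x\}|}{b} \;\leq\; \frac{N}{b},
\]
where in the last step we used $|A' \setminus \{x\}| \leq N$ regardless of whether $x \in A'$ or not.

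There is essentially no obstacle here: the proof is two lines once the indicator variables are set up, and no property of $\mathbbm H$ beyond the universal pairwise collision bound is used. The only minor care point is the corner case $x \in A'$ versus $x \notin A'$, and in both cases the final bound $N/b$ is correct because collisions of $x$ with itself are excluded by the phrase ``other elements'' in the statement.
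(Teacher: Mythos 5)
Your proof is correct, and it is exactly the standard linearity-of-expectation argument that the paper implicitly relies on: the paper gives no explicit proof, stating the theorem only as ``a straightforward consequence'' of the definition of a universal family. Your handling of the corner case $x \in A'$ versus $x \notin A'$ is a nice touch, and the bound $N/b$ holds in both cases as you note.
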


Notice that, in particular, if we request $A$ to contain all strings of length $\ell$ and $B$ to have length $\ell/2$, then the number of expected collisions is $2^{\ell/2}$, hence the probability of finding a collision is negligible in $\ell$.
There are several standard ways to construct universal families of hash functions (see  \cite{car:weg:79} for examples and details). 

\section{The oblivious transfer protocol}\label{sec:protocol}

In this section, we present a quantum protocol that achieves oblivious transfer from Alice to Bob in polynomial time. 
As already mentioned in the introduction, oblivious transfer protocol is a protocol in which Alice sends (transfers) a message
		 $m = m_1m_2\ldots m_\ell$ of length $\ell$ 
to Bob (during the transfer phase), which is recovered by him with only 
	$50\%$ of probability during the opening phase (the {\em transfer is probabilistic}). 
The protocol must satisfy two additional properties: 
	be {\em concealing}, \textit{i.e.}, Bob must not learn $m$ before the opening phase;
	and be {\em oblivious}, \textit{i.e.}, after the opening phase Alice must {\em not know with certainty} if Bob received $m$ or not.
Notice that Bob, unlike Alice, {\em does know} at the end of the protocol if he received the intended message. In the rest of this section, we present our results in terms of orthogonal measurements $M_\pi$, which according to Proposition~\ref{prop:measuredistinction} can be realized in polynomial time by applying Algorithm~\ref{alg:distinction}.

\begin{protocol}[Oblivious transfer]\label{prot:ot}
\  
\begin{quote}
	\begin{description}
		\item[Message to transfer] $m=m_1\dots m_\ell$.
		\item[Security parameter] $n$.
		\item[Universal hash function] $h:\Sigma^\ell\to \Sigma^{\ell/2}$.
		\item[Secret key] $\pi\in \K_ n$.
		\item[]\vspace*{-10pt}
		\item[Transfering phase:] \
			\begin{senum}
				\item Alice generates uniformly at random the secret key $\pi\in \K_n$.
				\item Using Algorithm~\ref{alg:keysgeneration} with $\pi$, and the operation $\csignal$, she constructs the state 
					$\rho_\pi^m = \rho_\pi^{m_1}\otimes \rho_\pi^{m_2}\otimes\ldots\otimes \rho_\pi^{m_\ell}$ 
				where 
					$ \rho_\pi^{m_i} = \rho_\pi^{+}$ if $m_i=1$, 
				and 
					$ \rho_\pi^{m_i} = \rho_\pi^{-}$ otherwise.
				\item Alice sends $\rho_\pi^m$ and $y=h(m)$ to Bob.
			\end{senum}
		\item[]\vspace*{-10pt}
		\item[Opening phase:] \
			\begin{senum}
				\setcounter{cntsenum}{3}
				\item Bob generates uniformly at random $\tau\in \S_n$ and sends it to Alice.
	
				\item Alice computes uniformly at random either 
						$\delta= \pi\circ \tau$ or $\delta=\tau\circ \pi$, 
					and sends it back to Bob.\label{prot:step2}
	
				\item Bob computes uniformly at random either 
						$\gamma= \delta\circ\tau^{-1}$ or $\gamma= \tau^{-1}\circ\delta$.\label{prot:step3}
	
				\item  Bob measures the observable $M_\gamma^{\otimes \ell}$ on the system given by Alice, obtaining the result $\tilde{m}$.\label{prot:step7}
	
				\item Bob checks if $h(\tilde{m}) = y$. 
						If so, he concludes that the message sent by Alice is $\tilde{m}$, {\it i.e.}, $\tilde m = m$.\label{prot:step5}
	
				\item If Bob got the correct message, 
						he chooses another $\pi'\in\K_n$ and measures the observable 
								$M_{\pi'}^{\otimes \ell}$ on the system, obtaining result 
								$r=r_1\dots r_\ell$. 
					If approximately half of the results $r_i$ are different from the corresponding $m_i$, 
					then Bob accepts the message; otherwise he aborts the protocol declaring that
					Alice tried to cheat. \label{prot:step9}
		\end{senum}
	\end{description}
\end{quote}
\end{protocol}

To show that our proposal is an oblivious transfer protocol we must prove that:
\begin{enumerate}
		\item If Alice is honest, then the protocol is computationally \emph{concealing}, 
				\textit{i.e.}, Bob cannot learn the message $m$ before the opening phase. 
			Notice that this follows directly from the hardness assumption of computational indistinguishability of states $\rho^+$ and $\rho^-$ (Theorem \ref{theo:hardness}) and from the fact that the hash function used for comparison has exponentially many collisions, \textit{i.e.}, only with negligible probability Bob can correctly invert $h$ in probabilistic polynomial time and obtain $m$.

		\item If both Alice and Bob are honest, 
				\textit{i.e.}, they play their roles accordingly to Protocol~\ref{prot:ot} to transfer message $m$, 
			then roughly in $50\%$ of the cases Bob will obtain~$m$ ({\em probabilistic transfer}). 
			Notice that the acknowledgment that $\tilde m$ is correct is given by the comparison of 
				$y$ with $h(\tilde m)$;

		\item If Bob is honest, then the protocol is \emph{oblivious}, 
				\textit{i.e.}, Alice cannot learn with certainty whether  Bob got the message $m$ or not. 
			This is a consequence of the impossibility of faster than light information transmission and it is ensured by the last step of the protocol. The \emph{rationale} for Step~\ref{prot:step9} is to prevent Alice from cheating by sending a state that would allow her to know with certainty that Bob would receive the message. We postpone the discussion of this point for the end of the paper. 
\end{enumerate}

Notice that if the order of compositions agree in Steps~\ref{prot:step2} and~\ref{prot:step3} of the Protocol, 
\textit{i.e.}, Alice and Bob applied respectively $\tau$ and $\tau^{-1}$ on the same side, then $\gamma=\pi$; 
otherwise, if Alice and Bob applied $\tau$ and $\tau^{-1}$ on different sides, then both 
$\gamma=\tau^{-1}\circ\pi\circ\tau$ and $\gamma=\tau\circ\pi\circ\tau^{-1}$ belong to $\K_n$ but are different from $\pi$.

Also, due to the properties of hash functions, two messages having the same hash are hard to find, in the sense that the probability of such event is negligible. 
Therefore, if in Step~\ref{prot:step5} of the Protocol $h(\tilde{m}) = y = h(m)$, then $\tilde m$ is, up to negligible probability, the actual message $m$. 

Therefore, if both Alice and Bob are honest, then Bob measures with equal probability either 
		$M_{\pi}^{\otimes\ell}$ or $M_{\pi^\prime}^{\otimes\ell}$, 
where $\pi^\prime = \tau\circ\pi\circ\tau^{-1} \in \K_ n$ for some $\tau\in\S_n$. 
If he measures $M_{\pi}^{\otimes\ell}$, then the measurement result $\tilde{m}$ will indeed be equal to $m$, which he can confirm by comparing $h(\tilde{m})$ and $y$. 
Otherwise, and since $\pi'$ does not match the used $\pi$, each single-system measurement will yield a random 
	$\tilde{m}_i$, 
and by confirming that $h(\tilde{m}) \neq y$, 
Bob will know that he did not receive the intended message. 

This way, if both parties are honest, Bob will recover the message with probability~$1/2$. We formalize these in the following theorem:

\begin{theorem}[Correctness of the Protocol \ref{prot:ot}]
\label{theo:correctness}
		Assume that $\QSCDff$ is polynomially hard even for quantum computers. If Alice and Bob correctly run Protocol~\ref{prot:ot} to transfer message $m=m_1\dots m_\ell$ 
		from Alice to Bob, then:

			\begin{enumerate}
				\item {\em{\bf(Concealing) }} Bob cannot infer $m$ before the opening phase except with negligible probability. 
				
				\item {\em{\bf(Probabilistic transfer) }} Bob will receive $m$ with probability 
						$1/2+\varepsilon(\ell)$, where $\varepsilon(\ell)$ is a negligible function.%
						\footnote{A function $\varepsilon(\ell)$ is said to be negligible if  $\varepsilon(\ell)<1/p(\ell)$ for any polynomial $p(\ell)$ and sufficiently large~$\ell$.}

				\item {\em{\bf(Oblivious) }}Alice remains oblivious to the fact that Bob received the message.
			\end{enumerate}
\end{theorem}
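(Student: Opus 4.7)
The plan is to prove the three claims separately, using the lemmas and observations laid out in Section~2.

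Concealing follows from two independent facts. First, prior to the opening phase Bob holds only $\rho_\pi^m$ together with $y=h(m)$, with $\pi$ uniformly random in $\K_n$ and unknown to him; reading the $i$-th bit $m_i$ from the factor $\rho_\pi^{m_i}$ is exactly the $\QSCDff$ decision problem, so by Theorem~\ref{theo:hardness} no polynomial-time quantum algorithm succeeds with non-negligible advantage, and a standard hybrid argument (replacing bits one at a time) extends this to the full string. Second, the hash value $y$ alone leaves roughly $2^{\ell/2}$ equally likely preimages, so even an unbounded guess of $m$ from $y$ has only negligible success probability.

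Probabilistic transfer reduces to a case analysis of the four equally likely joint outcomes of Alice's and Bob's independent coins in Steps~5 and~6. In two of them the $\tau$ and $\tau^{-1}$ cancel and $\gamma=\pi$, while in the other two $\gamma=\tau^{\mp 1}\circ\pi\circ\tau^{\pm 1}$. In both of the latter cases $\gamma$ lies in $\K_n$ (conjugation preserves being a fixed-point-free involution), but for overwhelmingly most $\tau\in\S_n$ we have $\gamma\neq\pi$, since the centralizer of $\pi$ is a negligible fraction of $\S_n$. When $\gamma=\pi$, Proposition~\ref{prop:measuredistinction} applied on each factor guarantees that $M_\pi^{\otimes\ell}$ returns $m$ with certainty, so the hash check in Step~8 succeeds. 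When $\gamma\neq\pi$, the outcome $\tilde m$ is statistically close to uniform on $\Sigma^\ell$, and by the universality of $h$ the event $h(\tilde m)=y$ with $\tilde m\neq m$ has probability at most $2^{-\ell/2}$, which is negligible. Combining the two cases gives success probability $1/2+\varepsilon(\ell)$ for a negligible $\varepsilon$.

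The oblivious property holds because Alice's transcript is independent of the coin flip in Step~6 that determines whether Bob computes $\delta\circ\tau^{-1}$ or $\tau^{-1}\circ\delta$. That coin is the sole piece of information fixing whether $\gamma=\pi$, and hence whether Bob recovers $m$; it is never transmitted to Alice and cannot be inferred from her view, which comprises $\pi$, her own coin, the hash $y=h(m)$, and Bob's $\tau$. Therefore Alice's posterior probability that Bob obtained $m$ remains the prior $1/2$, so she is oblivious.

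The main obstacle is the quantitative claim in the second part: controlling the distribution of $\tilde m$ when $\gamma\neq\pi$. This amounts to evaluating $\mathrm{Tr}(P_\gamma^{\pm}\rho_\pi^{\pm})$, which reduces to computing inner products of the form $\bracket{\psi^{\pm}_\pi(\sigma)}{\psi^{\pm}_\gamma(\sigma')}$ and summing them. The cleanest route is to exploit $\gamma=\tau^{\pm 1}\circ\pi\circ\tau^{\mp 1}$ and the uniform randomness of $\tau$ to bound the bias of each output bit away from $1/2$ by a negligible quantity; once this is in hand the remaining ingredients are direct appeals to the results of Section~2.
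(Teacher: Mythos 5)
Your outline for the concealing and oblivious parts matches the paper's argument (the paper likewise appeals to Theorem~\ref{theo:hardness} plus the collision bound for $h$, and to the fact that Bob's coin in Step~\ref{prot:step3} is never communicated to Alice), and your observation that $\gamma$ can accidentally equal $\pi$ when $\tau$ centralizes $\pi$ is a legitimate refinement the paper glosses over. But the heart of the probabilistic-transfer claim is missing. You explicitly defer ``the main obstacle'' --- showing that when $\gamma=\pi'\neq\pi$ the measurement outcome on each factor is (close to) unbiased --- and only sketch a ``cleanest route'' via averaging over $\tau$. That computation is precisely what the paper's proof consists of: it expands $P_{\pi'}^{\pm}\ket{\psi^+_\pi(\sigma)}$ in the basis $\{\ket{\psi^{\pm}_{\pi'}(\sigma')}\}$ and finds
$P_{\pi'}^{\pm}\ket{\psi^{+}_{\pi}(\sigma)}=\tfrac{1}{2\sqrt 2}\left(\ket{\sigma}\pm\ket{\sigma\circ\pi'}+\ket{\sigma\circ\pi}\pm\ket{\sigma\circ\pi\circ\pi'}\right)$,
four mutually orthogonal kets, so $\|P_{\pi'}^{\pm}\ket{\psi^{+}_{\pi}(\sigma)}\|^2=1/2$ \emph{exactly}, for every fixed $\pi'\neq\pi$ in $\K_n$ --- no averaging over $\tau$ is needed, and the per-bit distribution is exactly uniform rather than merely ``statistically close.'' Without this calculation your proof of item~2 is a plan, not a proof.

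A second genuine omission: correctness requires that an honest run does not abort, and your analysis never touches Step~\ref{prot:step9}. The paper must (and does) argue that when Bob has obtained $\tilde m=m$ via $M_\pi^{\otimes\ell}$, the post-measurement state is still $\rho_\pi^m$, so the subsequent measurement $M_{\pi'}^{\otimes\ell}$ with a fresh $\pi'\neq\pi$ yields independent unbiased bits $r_i$, and by a binomial concentration bound roughly half of them differ from $m_i$ except with probability negligible in $\ell$; hence honest Bob accepts. You should add both the explicit $\|P_{\pi'}^{\pm}\ket{\psi^+_\pi(\sigma)}\|^2=1/2$ computation and this no-abort check to complete the argument.
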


\begin{proof}
We prove each item stated in the theorem separately.
\begin{enumerate}
\item
The concealing property follows directly form Theorem \ref{theo:hardness} proved in \cite{kaw:kos:nis:yam:05}, which states that distinguishing $\rho_\pi^+$ from $\rho_\pi^-$ without knowing $\pi$ is polynomially hard even for a quantum computer. 
%
%
Notice that since Alice does not send $\pi$, the probability that Bob guesses the correct $\pi$ is negligible: the size of $\K_n$ is  
		$$|\K_n| = \ds{\frac{n!}{(n/2)!}}\; ,$$ 
which is already for $n=10$ huge enough for practical purposes.
On the other hand, since it is assumed that $h$ is a universal hash function, trying to recover~$m$ by computing the inverse of $h$ is only possible with negligible probability. 
In fact, since the hash function $h$ maps strings of length $\ell$ to strings of length $\ell/2$, by Theorem~\ref{theo:universalhash} one can recover $m$ with probability $2^{-\ell/2}$ which is negligible in $\ell$.

\item
It is easy to see that after Step~\ref{prot:step3} of Protocol~\ref{prot:ot},
$\gamma=\pi$ with probability $1/2$ if both Alice and Bob make their choices in Steps~\ref{prot:step2} and~\ref{prot:step3} at random, and Bob chooses $\tau$ uniformly at random. 

For $\gamma=\pi$, the result that the measurement 
	$M_\pi^{\otimes \ell}(\rho^m_\pi)$ equals $m$ 
follows from Proposition~\ref{prop:measuredistinction}.

It remains to show that when 
		$\gamma=\pi'\neq\pi$,
by performing the measurement $M_\gamma^{\otimes \ell}$,
the probability of recovering 
		$m$ 
from the quantum state 
	$\rho^m_\pi$ 
sent by Alice is negligible. 
First we prove that for any $\sigma$ the result of the measurement of $M_{\pi'}^\ell$ is random. 
Observe that
$$
\hspace{-20mm}\begin{array}{rcl}
	\ds P_{\pi'}^\pm(\ket{\psi^+_\pi(\sigma)}) 
	&=& 
		\ds \sum_{\sigma' \in \E_n}
			\ket{\psi^{\pm}_{\pi'}(\sigma')} 
			\bracket{\psi^{\pm}_{\pi'}(\sigma')}
			{\psi^{+}_{\pi}(\sigma)}\\[10pt]
	&=&
		\ds\frac{1}{2\sqrt 2} \sum_{\sigma' \in \E_n}
			(\ket{\sigma'} \pm \ket{\sigma' \circ\pi'})
			(\bra{\sigma'} \pm \bra{\sigma' \circ\pi'})
			(\ket\sigma + \ket{\sigma\circ\pi})\\[10pt]
	&=& 
		\ds\frac{1}{2\sqrt 2} \sum_{\sigma' \in \E_n}
			(\ket{\sigma'} \pm \ket{\sigma' \circ\pi'})
			(\bracket{\sigma'}{\sigma} +
				\bracket{\sigma'}{\sigma\circ\pi} \pm
				\bracket{\sigma'\circ\pi'}{\sigma} \pm
				\bracket{\sigma'\circ\pi'}{\sigma\circ\pi})\\[10pt]
	&=& 
		\ds 
			\left\{
				\begin{array}{rcl}
					\ds\frac{1}{2\sqrt 2} \sum_{\sigma' \in \E_n}
						(\ket{\sigma'} \pm \ket{\sigma' \circ\pi'})
						(\bracket{\sigma'}{\sigma} +
						\bracket{\sigma'\circ\pi'}{\sigma\circ\pi}),  
												& & \mbox{ if } \sigma \in \E_n\\
					\ds\frac{1}{2\sqrt 2} \sum_{\sigma' \in \E_n}
						(\ket{\sigma'} \pm \ket{\sigma'\circ \pi'})
						(\bracket{\sigma'}{\sigma\circ\pi} \pm
						\bracket{\sigma'\circ\pi'}{\sigma}), 
												& & \mbox{ if } \sigma \in \O_n
				\end{array}
			\right.\\[10pt]
	&=&
		\ds \frac{1}{2\sqrt 2}
			\left(
				\ket{\sigma} \pm \ket{\sigma \circ\pi'} + 
				\ket{\sigma\circ\pi} \pm \ket{\sigma\circ\pi\circ \pi'}
			\right)\\[10pt]
	&=&
		\ds \frac 1 2 (\ket{\psi^{\pm}_{\pi'}(\sigma)} \pm \ket{\psi^{\pm}_{\pi'}(\sigma\circ\pi\circ\pi')}\\[10pt]
	&=& 
		\ds \frac{1}{\sqrt 2}\left(\frac{1}{\sqrt 2} 
				(\ket{\psi^{\pm}_{\pi'}(\sigma)} \pm\ket{\psi^{\pm}_{\pi'}(\sigma\circ\pi\circ\pi')})  \right).
\end{array}
$$

Since
	$\ket{\psi^{\pm}_{\pi'}(\sigma)} $ and $\ket{\psi^{\pm}_{\pi'}(\sigma\circ\pi\circ\pi')}$ 
are  orthogonal, the vector 
	$ \frac{1}{\sqrt 2} (\ket{\psi^{\pm}_{\pi'}(\sigma)} \pm\ket{\psi^{\pm}_{\pi'}(\sigma\circ\pi\circ\pi')})$  
is unitary and $||P_{\pi'}^\pm(\psi^+_\pi)||^2 = 1/2$.
Hence, the probability of recovering $\pm$ from $\rho_\pi^+$ is
$$\begin{array}{rcl}
Prob	 (+; M_{\pi'}^{\otimes \ell} , \rho_\pi^+)	
		&=& \ds Tr[P_{\pi'}^+\rho_\pi^+P_{\pi'}^+]\\[10pt]
		&=& \ds \frac{2}{n!}\sum_{\sigma\in \E_n} Tr[P_{\pi'}^+\ket{\psi^{+}_{\pi'}(\sigma)}\bra{\psi^{+}_{\pi'}(\sigma)} P_{\pi'}^+]\\[10pt]
		&=& \ds \frac{2}{n!}\sum_{\sigma\in \E_n}||P_{\pi'}^+(\psi^+_\pi)||^2\\[10pt]
		&=& \ds \frac 1 2 
\end{array}
$$
and similarly 
$Prob (-; M_{\pi'}^{\otimes \ell} , \rho_\pi^+) = \frac 1 2$. 
{\it Mutatis mutandis } we also have $Prob (\pm; M_{\pi'}^{\otimes \ell} , \rho_\pi^-) = \frac 1 2$.
Hence, by measuring $M_{\pi'}^{\otimes \ell}$ on $\rho^m_\pi$ the probability of recovering $m$ is negligible in $\ell$. 

To conclude the proof of the second item we need to show that Bob aborts the protocol in Step~\ref{prot:step9} only with negligible probability. 

Notice that to reach Step~\ref{prot:step9}, where Bob aborts the protocol, he must had run successfully the verification in Step~\ref{prot:step7}.
This implies that Bob performed the measurement with the correct trapdoor $\pi$, hence the state stayed invariant, i.e., it is still $\rho_\pi^m$. 
If Bob chooses in Step~\ref{prot:step9} random $\pi'\neq \pi$, then the probability of recovering each bit of the message, as just seen above, is equal to $1/2$ and so, by a simple binomial argument, the probability of having a significant difference from half of the states is negligible on $\ell$.

\item 
Notice that Alice must send a permutation $\delta$, such that both 
	$\delta\circ\tau^{-1}$ and $\tau^{-1}\circ\delta$ 
are from $\K_ n$. 
Bob's choice to compose $\delta$ with $\tau^{-1}$ on the left, or on the right, is random and unknown to Alice: 
after sending $\delta$ there is no more communication between Alice and Bob and there is no information transmission from Bob to Alice. 
Therefore, the choice of Bob's measurement observable 
	$M_\gamma^{\otimes\ell}$ 
is also unknown to Alice as well: Alice cannot know if Bob has obtained the message $m$, or not. \qed
\end{enumerate}
\end{proof}

Finally we prove the security of the protocol against cheating strategies of Alice and Bob.
\begin{theorem}[Security] Assume that $\QSCDff$ is polynomially hard even for a quantum computer. The Protocol \ref{prot:ot} is secure against cheating, {i.e.} 
\begin{enumerate}
	\item {\em{\bf(Concealing) }}  If Alice is honest, then Bob cannot learn $m$ before the opening phase even if he cheats.
	\item {\em{\bf(Oblivious) }}  If Bob is honest, then Alice cannot learn with certainty if he received the message even if she tries to cheat.
\end{enumerate}
\end{theorem}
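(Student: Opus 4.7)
The plan is to prove the two parts separately: concealing follows from reducing a cheating Bob to a $\QSCDff$-distinguisher, and obliviousness follows from a no-signaling argument reinforced by the statistical test in Step~\ref{prot:step9}.

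For concealing against a dishonest Bob, I would proceed by reduction to Theorem~\ref{theo:hardness}. Before the opening phase Bob holds only $\rho_\pi^m$ together with $y=h(m)$; any cheating strategy is a polynomial-time quantum algorithm $\mathcal{A}$ on these inputs. Suppose $\mathcal{A}$ outputs $m$ with non-negligible probability. Then on an instance $\chi\in\{(\rho_\pi^+)^{\otimes k(n)},(\rho_\pi^-)^{\otimes k(n)}\}$ of $\QSCDff$ (with $k(n)\ge\ell$) I would build a distinguisher: sample a random target $m\in\Sigma^\ell$, use $\csignal$ (Lemma~\ref{alg:conversion}) to flip the sign of selected tensor factors of $\chi$ so that the resulting block looks like $\rho_\pi^m$ under the unknown sign, hand it together with $y=h(m)$ to $\mathcal{A}$, and compare $\mathcal{A}$'s output with $m$ to read off the sign. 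A non-negligible advantage for $\mathcal{A}$ yields a non-negligible advantage against $\QSCDff$, contradicting Theorem~\ref{theo:hardness}. The only additional leak, the hash $y$, has approximately $2^{\ell/2}$ preimages by Theorem~\ref{theo:universalhash}, so purely classical inversion succeeds with probability at most $2^{-\ell/2}$, which is negligible in $\ell$.

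For obliviousness against a dishonest Alice, the pivotal point is that after Alice sends $\delta$ in Step~\ref{prot:step2} there is no further communication from Bob to Alice. By no-signaling, no local operation Alice performs on her private registers can depend on Bob's subsequent private choice in Step~\ref{prot:step3} or on his measurement outcomes in Steps~\ref{prot:step7}--\ref{prot:step9}. Hence any posterior Alice computes on the event ``Bob received $m$'' must coincide with her prior, and in particular she cannot be certain. The only residual loophole is for Alice to prepare a malicious state (possibly one half of an entangled pair) together with a malicious $y$ that forces Bob deterministically into the ``accept'' branch, thereby letting her predict the outcome trivially. Step~\ref{prot:step9} is precisely what rules this out: I would argue that if Alice's state is non-negligibly far in trace distance from an honest $\rho_\pi^m$ (after tracing out her ancilla and averaging over her choice in Step~\ref{prot:step2}), then Bob's second measurement with a fresh random $\pi'\in\K_n\setminus\{\pi\}$ has outcome statistics that deviate from the honest i.i.d.\ $\mathrm{Bernoulli}(1/2)$ behaviour established in the proof of Theorem~\ref{theo:correctness}; a Chernoff bound on the Hamming distance of $r$ to $\tilde m$ then triggers abort with probability exponentially close to $1$ in $\ell$. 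Conditioned on not aborting, Alice's state is statistically indistinguishable from an honest preparation and the no-signaling argument closes the case.

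The main obstacle I anticipate is the Step~\ref{prot:step9} analysis against a fully adversarial, purifying Alice, namely turning the intuitive ``far from honest implies detection'' into a quantitative bound uniform in Alice's private ancilla. I would handle this via a standard hybrid argument that compares Alice's reduced state to the honest $\rho_\pi^m$ using the trace-distance bound on distinguishing measurements, and then applies a Chernoff bound to the Hamming distance statistic obtained with a uniformly random $\pi'$. Everything else should reduce either to $\QSCDff$-hardness or to a no-signaling observation, both of which are essentially off-the-shelf at this point in the paper.
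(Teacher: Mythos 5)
Your proposal follows essentially the same two-pronged approach as the paper: concealing is reduced to the quantum hardness of distinguishing $\rho_\pi^+$ from $\rho_\pi^-$ without knowledge of $\pi$ (the paper simply defers to Theorem~\ref{theo:correctness} and the hash-collision bound, whereas you sketch the explicit distinguisher built from $\csignal$, which is if anything more detailed than what the paper writes down), and obliviousness rests on the absence of any communication from Bob to Alice after Step~\ref{prot:step2} together with Step~\ref{prot:step9} catching the one remaining attack, a state engineered so that Bob's outcome is predictable to Alice. One caveat on the second part: your claim that a state non-negligibly far in trace distance from an honest $\rho_\pi^m$ necessarily produces detectably non-uniform statistics under the particular measurement $M_{\pi'}^{\otimes\ell}$ is stronger than what the theorem requires and is not literally correct, since trace distance only lower-bounds the advantage of the \emph{optimal} distinguishing measurement, not of an arbitrary fixed one; the paper proves only the weaker, sufficient point that a state yielding $m$ with certainty under every $M_\gamma$ (which is what Alice needs in order to be certain Bob accepts) also yields $r=m$ under $M_{\pi'}$ and is therefore rejected in Step~\ref{prot:step9}. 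Restricting your Step~\ref{prot:step9} analysis to that deterministic case closes the argument without the hybrid/Chernoff machinery you anticipate needing.
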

\begin{proof}
\begin{enumerate}
\item 
Notice that upon receiving a system from Alice the only thing Bob can do in order to learn its state (and hence the message $m$) is to simply ``look'' at it, {\it i.e.}, perform a measurement in order to distinguish between $\rho_\pi^+$ and  $\rho_\pi^-$ (without knowing $\pi$). This is exactly what an honest Bob could do, which was proved in Theorem \ref{theo:correctness} to be unfeasible.

\item
To finish the argument of the security of Protocol \ref{prot:ot} we show that  the protocol is unconditionally \emph{oblivious} against a cheating Alice, \textit{i.e.}, there is no strategy for Alice which would enable her to know with certainty if Bob received the message~$m$ or not. 
This is ensured by the last step of the protocol.
Notice that since Alice cannot know beforehand which permutation $\tau$ will be chosen by Bob, nor she knows which~$\gamma$ Bob choses in Step~\ref{prot:step3} of the protocol, she does not know which measurement $M_\gamma^{\otimes \ell}$, with $\gamma\in\K_n$, is performed by Bob in Step~\ref{prot:step7}. 
Therefore, in order to know if Bob received the message $m$, she must prepare a state $\rho^m$  that leads to the same answer $m$ regardless of the measurement selected by Bob.
Obviously, in order to satisfy the requirement that Bob learns $m$ in $50\%$ of the cases, she sends uniformly at random either $\rho^m$, in which case she knows with certainty that Bob got the message, or a completely mixed state $(\mathbbm 1/n!)^{\otimes \ell}$, in which case she knows with certainty that Bob does not get the message except with negligible probability.
So, if Alice wants to be non-oblivious, she needs to prepare states that 
give with certainty the same result for every measurement $M_\pi$, and are thus invariant,
which is the reason for introducing Step~\ref{prot:step9}.
If Bob got the correct message, he can recheck that Alice did not use this cheating strategy by performing another measurement of the same kind but with a different $\pi'$. 
Since the state sent by Alice has to be invariant for any measurement, measuring the state with this $\pi'$ will lead to the same result as the original choice and Bob will thus abort the protocol. \qed
\end{enumerate}
\end{proof}

\section{Conclusions}\label{sec:conclusion}
Oblivious transfer is an important primitive for designing cryptographic protocols and secure multiparty computation schemes.
In this paper we proposed a polynomial time quantum protocol for oblivious transfer of information from Alice to Bob based on the $\QSCDff$ state distinguishability problem. 
We showed that, assuming $\QSCDff$ to be polynomially hard even for a quantum computer, our protocol is computationally concealing, oblivious, achieves the goal of transferring information with probability close to $1/2$, and is secure against cheating strategies.
The oblivious and the probabilistic transferring properties rely on the laws of quantum mechanics, while the acknowledgment of the message transfer is ensured by the use of hash functions. 
Note that, in general, one may not need to use hash functions. 
For example, if the message sent by Alice is of a particular type, say an $NP$-problem ({\it e.g.} SAT or some hard optimization problem), then there is no need to encode the message in the hash value: 
Bob can verify if he received the message by simply checking if it is the solution of the problem. 
Also if the message sent by Alice is a binary code of some text in a human language, then the verification of the meaningful information serves as acknowledgment.

\section*{Acknowledgments}
This work was partially supported by FCT projects 
		QSec PTDC/EIA/67661/2006,
		QuantPrivTel PTDC/EEATEL/ 103402/2008,
		ComFormCrypt PTDC/EIA-CCO/ 113033/2009,
		PEst-OE/EEI/LA0008/2013, 
and SQIG's initiative PQDR (Probabilistic, Quantum and Differential Reasoning), QuantTel and `P-Quantum', as well as Network of Excellence, Euro-NF.

Andr\'e Souto also acknowledges the FCT postdoc grant SFRH/BPD/76231/2011.

\end{document}